\newtheorem{theorem}{Theorem}
\newtheorem{algorithm}[theorem]{Algorithm}
\newtheorem{corollary}[theorem]{Corollary}
\newtheorem{proposition}[theorem]{Proposition}
\newenvironment{proof}{ \textbf{Proof:} }{ \hfill $\Box$}
\newcommand{\figref}[1]{{Fig.}~\ref{#1}}
\def\bb0{{\mathbb{0}}}
\def\ba{{\mathbf{a}}}
\def\bb{{\mathbf{b}}}
\def\bh{{\mathbf{h}}}
\def\bn{{\mathbf{n}}}
\def\bp{{\mathbf{p}}}
\def\br{{\mathbf{r}}}
\def\bs{{\mathbf{s}}}
\def\bw{{\mathbf{w}}}
\def\by{{\mathbf{y}}}
\def\b0{{\mathbf{0}}}
\def\bA{{\mathbf{A}}}
\def\bH{{\mathbf{H}}}
\def\bI{{\mathbf{I}}}
\def\bS{{\mathbf{S}}}
\def\bW{{\mathbf{W}}}
\def\bbE{{\mathbb{E}}}
\def\cA{\mathcal{A}}
\def\cB{\mathcal{B}}
\def\cS{\mathcal{S}}
\def\cU{\mathcal{U}}
\def\sf0{{\mathsf{0}}}
\newcommand{\sref}[1]{{Section}~\ref{#1}}
\begin{document}
	
\title{Massive MIMO Combining with Switches}
\author{Ahmed Alkhateeb, Young-Han Nam, Jianzhong (Charlie) Zhang, and Robert W. Heath, Jr.\thanks{Ahmed Alkhateeb and Robert W. Heath Jr. are with The University of Texas at Austin (Email: aalkhateeb, rheath@utexas.edu). Young-Han Nam and Jianzhong (Charlie) Zhang are with Samsung Research America-Dallas (Email: younghan.n, jianzhong.z@samsung.com).} \thanks{This work was done while the first author was with Samsung Research America-Dallas. The Authors at the University of Texas at Austin are supported in part by the National Science Foundation under Grant No. 1319556.}}

\maketitle

\begin{abstract}
	Massive multiple-input multiple-output (MIMO) is expected to play a central role in future wireless systems. The deployment of large antenna arrays at the base station and the mobile users offers  multiplexing and beamforming gains that boost system spectral efficiency. Unfortunately, the high cost and power consumption of components like analog-to-digital converters makes assigning an RF chain per antenna and applying typical fully digital precoding/combining solutions difficult. In this paper, a novel architecture for massive MIMO receivers, consisting of arrays of switches and constant (non-tunable) phase shifters, is proposed. This architecture applies a quasi-coherent combining in the RF domain to reduce the number of required RF chains. An algorithm that designs the RF combining for this architecture is developed and analyzed. Results show that the proposed massive MIMO combining model can achieve a comparable performance to the fully-digital receiver architecture in single-user and multi-user massive MIMO setups.
\end{abstract}

\begin{IEEEkeywords}
	Massive MIMO, switches-based hybrid combining, quasi-coherent combining. 
\end{IEEEkeywords}

\section{Introduction} \label{sec:Intro}
Massive MIMO offers large multiplexing and array gain, making it an attractive technology for next generation cellular systems \cite{Boccardi2014}. The hardware and power requirements of the typical MIMO systems that allocate a complete RF chain per antenna become a limiting factor for the practical implementation of these systems. This paper proposes a new architecture for massive MIMO receivers that requires a small number of RF chains and low-power RF hardware components, while performing close to fully-digital solutions.

%\subsection{Prior Work}
To reduce power consumption, hybrid analog/digital precoding architectures, which use a small number of RF chains and divide the precoding/combining processing between RF and baseband domains, have been proposed for large-scale MIMO systems \cite{ElAyach2014,Alkhateeb2014b,Zeng2014,Brady2013,Mendez-Rial2015}. The RF precoding circuit can be implemented using networks of variable phase shifters \cite{ElAyach2014,Alkhateeb2014b}, or lens antennas \cite{Zeng2014,Brady2013}. To further reduce power consumption, a hybrid architecture that uses switches instead of phase shifters has been proposed in \cite{Mendez-Rial2015}. The architecture in \cite{Mendez-Rial2015}, though, assumes that each RF chain connects only to one antenna and relies on antenna subset selection to design the RF precoders. This reduces the array gain since fewer antennas are active compared to other hybrid precoding solutions \cite{ElAyach2014,Alkhateeb2014b}.

%\subsection{Contribution}
In this paper, we propose a new architecture for massive MIMO receivers based on RF antenna switches and constant (non-tunable and pre-designed for the frequency band of interest) phase shifters, which are attractive components due to their low power consumption \cite{Mendez-Rial2015,Ahn2009}. Further, as the proposed architecture applies a quasi-coherent combining in the RF domain, it requires much smaller number of RF chains compared with the number of antennas. For this architecture, we develop a low-complexity algorithm that designs the combining matrix in polynomial time in terms of the number of antennas. The performance of the proposed architecture and developed algorithm is analyzed for both single-user and multi-user massive MIMO systems. Although the analytical results in this letter are derived for the proposed architecture, they can also be applied to evaluate the performance of conventional hybrid precoding architectures with quantized phase shifters. Numerical simulations verify the analytical results and show that the proposed architecture achieves a comparable performance to fully-digital solutions and phase-shifters based hybrid architectures.

\textbf{Notation:} we use the following notation: $\bA$ is a matrix, $\ba$ is a vector, $a$ is a scalar, and $\cA$ is a set. $\bA^\mathrm{T}$, $\bA^*$ are its transpose and conjugate transpose, respectively. $\ba_n$ is $n$th element of $\ba$. $[\bA]_{n,m}$ is the element at the $n$th row and $m$th column of $\bA$.

\section{System Model}\label{sec:Model}
Consider the massive MIMO system model depicted in \figref{fig:Model}, where a massive MIMO basestation (BS) receiver with $N$ antennas is serving $U$ users in an uplink massive MIMO transmission.  The receiver is assumed to employ a number of RF chains $N_\mathrm{RF} \geq U$. Further, for simplicity of exposition, we assume that the BS uses only $U$ (out of the $N_\mathrm{RF}$) RF chains to simultaneously serve the $U$ users. Each RF chain is connected to all the antennas via a network of switches and constant phase shifters.

In this paper, we focus on uplink transmission. Let $\bH$ denote the $N \times U$ channel matrix that gathers the channel vectors of the $U$ users as $\bH=\left[\bh_1, ..., \bh_U \right]$, and let $\bs$ represent the  transmitted symbols from the $U$ users, such that $\bbE\left[s_u s_u^*\right] = {1}$. As this paper focuses on the achievable performance of the proposed transceiver architecture in \figref{fig:Model} compared with fully-digital solutions, we assume for simplicity of exposition that perfect power control has already been performed, canceling the large-scale fading effects, such that $P$  represents the average received signal power from all users.  Further, we adopt a narrowband massive MIMO channel model in this paper where an IID channel matrix is assumed, with $\left[\bH\right]_{n,u} \sim \mathcal{CN}\left(0,1\right)$. This assumption is made to simplify the analysis. The proposed architecture and algorithm in this paper, though, can be applied to more  general channel statistics. If $\bn \sim \mathcal{CN}\left(\boldsymbol{0}, \sigma^2 \bI\right)$ represents the noise vector, then the discrete-time received signal is
\begin{equation}
\br=\sqrt{P} \bH \bs+ \bn.
\label{eq:rec_signal}
\end{equation}
At the receiver, a combining vector  $\bw_u, u=1, ..., U$ is used to combine the signal from the $u$th user. Defining $\bW=\left[\bw_1, ..., \bw_U\right]$, the received signal after processing is
\begin{equation}
\by=\sqrt{P} \bW^* \bH \bs+ \bW^* \bn.
\label{eq:signal_transmitted}
\end{equation}
As the combining matrix $\bW$ is implemented with switches and constant phase shifters, the elements of this matrix are subject to a constant modulus constraint. Further, as only $N_\mathrm{Q}$ constant phase shifts are possible, the combining vector of each user $u$ can be written as $\bw_u=\bS_u \bp$, where $\bp=[1, e^{-j\frac{2 \pi}{N_\mathrm{Q}}}, ..., e^{-j\frac{(N_\mathrm{Q}-1)2 \pi}{N_\mathrm{Q}}}{\vphantom{AA}]}^\mathrm{T}$, and $\bS_u \in \cB^{N \times N_\mathrm{Q}}$ is a binary switching matrix. To ensure that each antenna is switched to only one phase shifter per RF chain, the switching matrix $\bS_u$ is subject to the constraint $\sum_{q=1}^{N_\mathrm{Q}} \left[\bS_u\right]_{n,q}=1$. Note that the model in \eqref{eq:signal_transmitted} and the performance analysis in \sref{sec:Performance} do not assume any baseband combining processing. Additional combining, though, can be done in the baseband to further improve the performance as will be shown in \sref{sec:Results}.

\begin{figure} [t]
\centerline{\includegraphics[width=\linewidth]{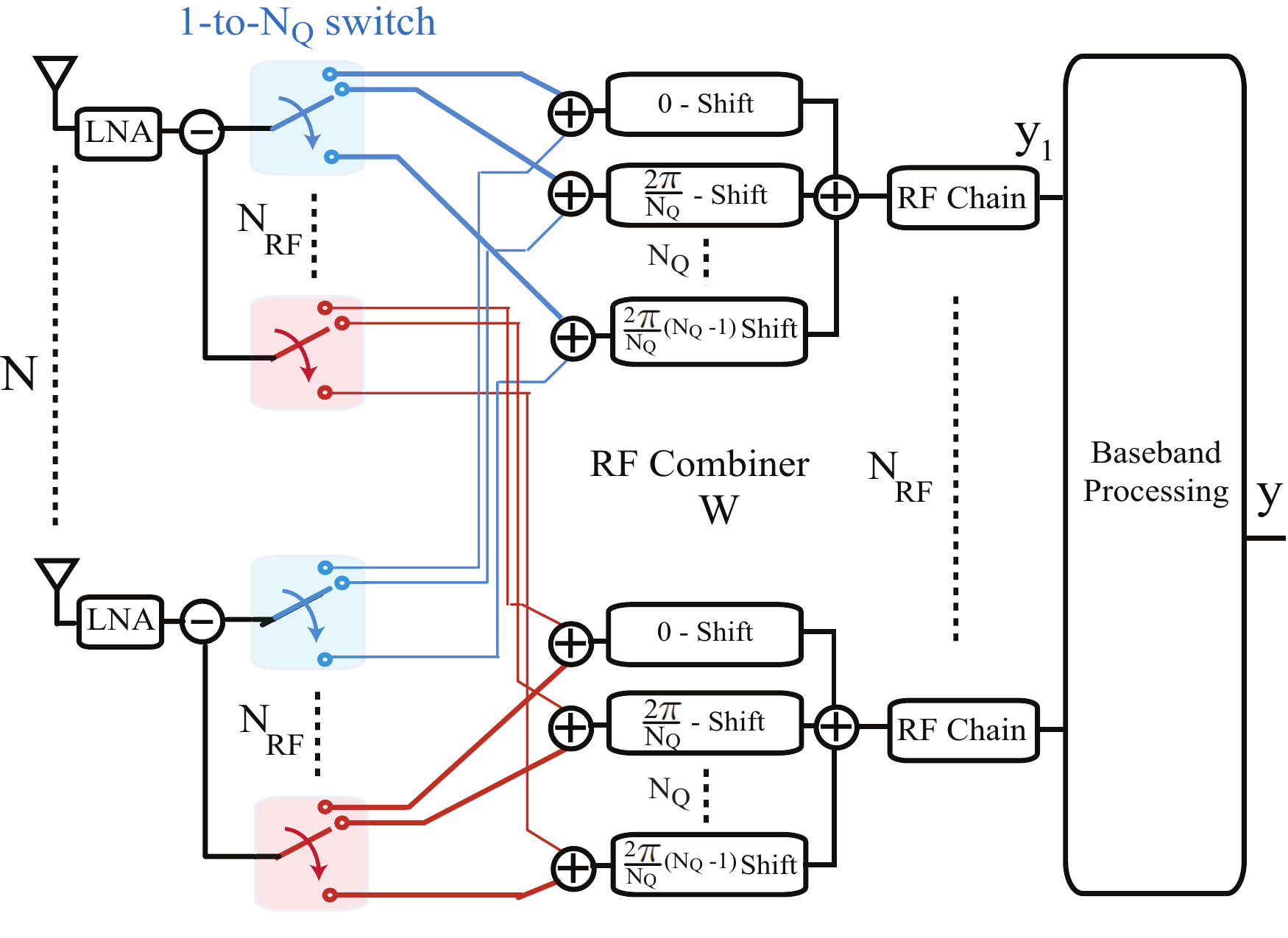}
}
\caption{A block diagram of the proposed switches-based massive MIMO receiver with $N$ antennas, $N_\mathrm{RF}$ RF chains, and $N_\mathrm{Q}$ constant (non-tunable) phase shifters per RF chain.}
\label{fig:Model}
\end{figure}

\begin{algorithm} [!t]                     % enter the algorithm environment
	\caption{Quasi-Coherent Switch Combining}          % give the algorithm a caption
	\label{alg:SCAP}                           % and a label for \ref{} commands later in the document
	\begin{algorithmic} %[1]                   % enter the algorithmic environment
		\State \textbf{Initialization:} $\phi_q=\frac{(2 q-1) \pi }{N_\mathrm{Q}}, q=1, ..., N_\mathrm{Q}$
		\For {$u, u=1, ..., U$}
		\State $\cS_q^u=\phi$, $\bS_u=\boldsymbol{0}$
		\For {$n, n=1,...,N$ }
		\State $q^\star = \displaystyle{\arg \hspace{-12pt} \min_{q=1, ..., N_\mathrm{Q}}} \left(\angle{h_{u,n}}\right)_{\left[0,2 \pi\right]} - \phi_q$
		\State $ \cS_{q^\star}^u = \cS_{q^\star}^u \cup \left\{n\right\}$, $\left[\bS_u\right]_{n,q^\star}=1$
		\EndFor
		\EndFor	
	\end{algorithmic}
\end{algorithm}
%%%%%%%%%%%%%%%%%%%%%%%%%%%%%%%%%%%%%%%%%%%%%%
\section{Quasi-Coherent Combining Algorithm} \label{sec:Proposal}
%%%%%%%%%%%%%%%%%%%%%%%%%%%%%%%%%%%%%%%%%%%%%%
Considering the system model in Section \ref{sec:Model}, the signal-to-interference-plus-noise ratio ($\mathsf{SINR}_u$) of user $u$ is
\begin{equation}
\mathsf{SINR}_u=\frac{P \left|\bw_u^* \bh_u\right|^2}{P \sum_{\ell \neq u}^{U} \left|\bw_u^* \bh_\ell\right|^2+\sigma^2 \left\|\bw_u\right\|^2}. \label{eq:SINR}
\end{equation}

The objective of our design for the combining matrix $\bW$ is to maximize the system sum rate. Taking into consideration the hardware constraints on the combining matrix implementation, this design problem can be formulated as
\begin{align}
\left\{\bS_u^\star\right\}_{u=1}^U  = & \arg\max \ \ \sum_{u=1}^{U}  \log_2\left(1+\mathsf{SINR}_u\right) \nonumber\\
& \mathrm{s.t.} \hspace{23 pt}  \left[\bS_u\right]_{n,q} \in \left\{0,1\right\}, \ \ \forall u,n,q, \label{eq:Opt}\\
& \hspace{35pt}  \sum_{q=1}^{N_\mathrm{Q}} \left[\bS_u\right]_{n,q}  = 1, \nonumber
\end{align}
where the first constraint is due to the use of switches and the second constraint is due to the restriction imposed on each antenna to be connected to exactly one switch. The problem in \eqref{eq:Opt} represents a non-coherent combining problem with combining vectors $\bw_u=\bS_u \bp, u=1, ..., U$ taken form a quantized space. Finding the optimal solution requires an exhaustive search over all $\left(N_\mathrm{Q}\right)^{N_\mathrm{RF} N} $ possible solutions. For large antenna systems, the complexity will likely be too high to directly implement the exhaustive search. Targeting a low-complexity solution, we propose Algorithm \ref{alg:SCAP} that greedily designs the switch selection matrices $\left\{\bS_u\right\}_{u=1}^{U}$ in polynomial time in terms of the number of antennas.

Algorithm \ref{alg:SCAP} can be summarized as follows. For each user $u$, the angle of the $n$th channel element $\left[\bh_u\right]_{n}$, $n=1, 2, ..., N$, is compared with the quantization angles $\frac{(2 q-1) \pi }{ N_\mathrm{Q}}, q=1, ..., N_\mathrm{Q}$. The $n$th antenna switch then selects the constant phase shifter from $\bp$ that corresponds to the closest quantization angle. Physically, this means that Algorithm 1 partitions the entries of each channel vector $\bh_u$ into $N_\mathrm{Q}$ sets based on the phases of the complex vectors representing the entries of $\bh_u$. This partitioning is done such that the $q$th set, $q=1, 2, ..., N_\mathrm{Q}$, contains the entries whose phases are in $[\left(q-1\right)\frac{2 \pi}{N_\mathrm{Q}}, q\frac{2 \pi}{N_\mathrm{Q}}]$. This is implemented using the switches as each set of them selects one of the $N_\mathrm{Q}$ phase shifters. Each set of channel entries are then rotated in the complex plane to the first phase sector, $\left[0, \frac{2 \pi}{N_\mathrm{Q}}\right]$, using the phase shifters before combining the channel sets together. Therefore, Algorithm \ref{alg:SCAP} combines the channel elements quasi-coherently as the phase differences between all the channel entries after switching and rotation are less than $\frac{2 \pi}{N_\mathrm{Q}}$.  
Despite its low-complexity, Algorithm \ref{alg:SCAP} can achieve good performance compared with the fully-digital high-complexity solution, as will be shown in the next section.

\section{Performance Analysis} \label{sec:Performance}
In this section, we analyze the performance of the proposed algorithm and receiver architecture in massive MIMO systems.

\subsection{Single User Systems}
\begingroup
\allowdisplaybreaks
\begin{theorem} For single-user systems, if $\mathsf{SNR}_u^\mathrm{MRC}$ denotes the achievable SNR with maximum ratio combining (MRC) and fully-digital receiver, and $\mathsf{SNR}_u^\mathrm{SC}$ denotes the achievable SNR with Algorithm \ref{alg:SCAP} and the receiver in \figref{fig:Model}, then
	\begin{equation}
	\lim_{N \to \infty} \frac{\mathsf{SNR}_u^\mathrm{SC}}{\mathsf{SNR}_u^\mathrm{MRC}} \xrightarrow{\text{a.s.}} \frac{N_\mathrm{Q}^2}{4 \pi} \sin^2\left(\frac{\pi}{N_\mathrm{Q}}\right).
	\end{equation}
	\label{thm:SNR}
\end{theorem}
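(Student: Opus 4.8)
The plan is to evaluate both SNRs in closed form as functionals of the channel vector $\bh_u$, reduce their ratio to a single normalized inner product, and then apply the strong law of large numbers (SLLN) to the IID channel entries.

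First I would record the two SNRs. For the fully-digital MRC receiver the optimal combiner is $\bw_u=\bh_u$, so specializing \eqref{eq:SINR} to $U=1$ gives $\mathsf{SNR}_u^\mathrm{MRC}=P\|\bh_u\|^2/\sigma^2$. For Algorithm~\ref{alg:SCAP} the combiner $\bw_u=\bS_u\bp$ has unit-modulus entries, so $\|\bw_u\|^2=N$ and $\mathsf{SNR}_u^\mathrm{SC}=P|\bw_u^*\bh_u|^2/(\sigma^2 N)$. The quantities $P$ and $\sigma^2$ then cancel in the ratio, leaving
\begin{equation}
\frac{\mathsf{SNR}_u^\mathrm{SC}}{\mathsf{SNR}_u^\mathrm{MRC}}=\frac{|\bw_u^*\bh_u|^2}{N\,\|\bh_u\|^2},
\end{equation}
so only the channel geometry remains to be analyzed.

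Next I would make the action of the algorithm explicit. Writing $h_{u,n}=|h_{u,n}|e^{\j\theta_n}$, the switch routes antenna $n$ to the constant phase shifter whose angle is nearest $\theta_n$, so its contribution to $\bw_u^*\bh_u$ is $|h_{u,n}|e^{\j\delta_n}$ with a residual phase $\delta_n$ confined to a window of width $2\pi/N_\mathrm{Q}$. Because $|\sum_n |h_{u,n}|e^{\j\delta_n}|$ is invariant under a common rotation of all the $\delta_n$, I may center this window and treat $\delta_n$ as uniform on $[-\pi/N_\mathrm{Q},\pi/N_\mathrm{Q}]$. The structural fact I would state carefully is that for a circularly symmetric $\cCN(0,1)$ entry the phase $\theta_n$ is uniform on $[0,2\pi)$ and independent of the magnitude $|h_{u,n}|$; hence $\delta_n$ is independent of $|h_{u,n}|$, which lets the relevant expectations factor.

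Finally I would split $\bw_u^*\bh_u=\sum_n |h_{u,n}|\cos\delta_n+\j\sum_n |h_{u,n}|\sin\delta_n$ and apply the SLLN to each IID sum. Using $\bbE[|h_{u,n}|]=\sqrt{\pi}/2$ (Rayleigh magnitude), $\bbE[\cos\delta_n]=\tfrac{N_\mathrm{Q}}{\pi}\sin\tfrac{\pi}{N_\mathrm{Q}}$, and $\bbE[\sin\delta_n]=0$, the continuous mapping theorem gives $|\bw_u^*\bh_u|^2/N^2\to\tfrac{\pi}{4}\cdot\tfrac{N_\mathrm{Q}^2}{\pi^2}\sin^2\tfrac{\pi}{N_\mathrm{Q}}$ almost surely, while $\|\bh_u\|^2/N\to\bbE[|h_{u,n}|^2]=1$; dividing the two limits yields $\tfrac{N_\mathrm{Q}^2}{4\pi}\sin^2\tfrac{\pi}{N_\mathrm{Q}}$. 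The step I expect to be the main obstacle is the second one: pinning down the distribution of $\delta_n$ and its independence from $|h_{u,n}|$, since this is precisely what legitimizes the factorization $\bbE[|h_{u,n}|\cos\delta_n]=\bbE[|h_{u,n}|]\,\bbE[\cos\delta_n]$. Once the window width $2\pi/N_\mathrm{Q}$ and this independence are justified, the SLLN and the trigonometric simplification (e.g.\ $\sin^2 a+(1-\cos a)^2=4\sin^2\tfrac a2$ if one instead uses the window $[0,2\pi/N_\mathrm{Q}]$) are routine.
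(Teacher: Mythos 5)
Your proposal is correct and follows essentially the same route as the paper's proof: reduce both SNRs to the normalized inner product $\left|\bw_u^* \bh_u\right|^2/(N\|\bh_u\|^2)$, exploit the magnitude--phase independence of the circularly symmetric Gaussian entries, and apply the SLLN with $\bbE\left[\left|h_{u,n}\right|\right]=\sqrt{\pi}/2$ and the uniform residual-phase expectations. The only (cosmetic) difference is that you center the residual-phase window at zero so that the sine term vanishes, whereas the paper keeps the window $\left[0,\tfrac{2\pi}{N_\mathrm{Q}}\right]$ and finishes with the identity $\sin^2 a+\left(1-\cos a\right)^2=4\sin^2\left(a/2\right)$ --- the same identity you note yourself.
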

\begin{proof} See Appendix \ref{app:SNR}
\end{proof}

This result means that around $65.6\%$ of the MRC SNR gain, with fully-digital hardware, can be achieved with the proposed algorithm and receiver architecture using only $N_\mathrm{Q}=4$ constant phase shifters. This result is also verified by numerical simulation as will be discussed in \sref{sec:Results}.
\endgroup
\subsection{Multi-User Systems}
The previous results can also be extended to multi-user systems, as captured by the following proposition.
\begin{proposition} The achievable SINR of Algorithm \ref{alg:SCAP} with the massive MIMO receiver in \sref{sec:Model} converges almost surely to $\frac{N}{U} \frac{N_\mathrm{Q}^2}{4 \pi} \sin^2\left(\frac{\pi}{N_\mathrm{Q}}\right)$  as $N, U \to \infty$ and $\frac{N}{U}$ is a constant ratio, i.e.,
\begin{equation}
	\lim_{\substack{N,U \to \infty \\ \frac{U}{N}=\text{constant}}} \mathsf{SINR}_u^\mathrm{SC} \xrightarrow{\text{a.s.}} \frac{N}{U} \frac{N_\mathrm{Q}^2}{4 \pi} \sin^2\left(\frac{\pi}{N_\mathrm{Q}}\right). 
\end{equation}
\label{prop:SINR}
\end{proposition}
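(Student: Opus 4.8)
The plan is to decompose the SINR in \eqref{eq:SINR} into its signal, interference, and noise contributions and to analyze each under the stated scaling, writing $N = cU$ with $c = N/U$ fixed so that the joint limit $N,U\to\infty$ collapses to a single-parameter limit in $U$. The first useful reduction is that the combining vector $\bw_u=\bS_u\bp$ has unit-modulus entries, so $\left\|\bw_u\right\|^2 = N$ holds deterministically and the noise term in the denominator is exactly $\sigma^2 N$.

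For the signal term in the numerator I would reuse the single-user computation behind Theorem~\ref{thm:SNR}. The key fact established there is that, after switching and rotation, every entry of $\bh_u$ is mapped into the first phase sector $[0,2\pi/N_\mathrm{Q}]$, so that $\bw_u^*\bh_u = \sum_{n=1}^N |h_{u,n}|\,e^{\j\theta_n}$ with $\theta_n$ uniform on $[0,2\pi/N_\mathrm{Q}]$ and independent of the magnitude $|h_{u,n}|$. The strong law of large numbers over the $N$ antennas gives $\frac1N\bw_u^*\bh_u \to \bbE[|h|]\,\bbE[e^{\j\theta}]$ almost surely, and hence $\frac{1}{N^2}\left|\bw_u^*\bh_u\right|^2 \xrightarrow{\text{a.s.}} \frac{N_\mathrm{Q}^2}{4\pi}\sin^2\!\left(\pi/N_\mathrm{Q}\right)$. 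Since this is precisely the constant appearing in Theorem~\ref{thm:SNR}, it can be imported directly rather than rederived.

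The crux is the interference term $I_u := \sum_{\ell\neq u}\left|\bw_u^*\bh_\ell\right|^2$. The essential observation is that $\bw_u$ is a deterministic function of $\bh_u$ alone and is therefore independent of the interfering channels $\{\bh_\ell\}_{\ell\neq u}$. Conditioned on $\bh_u$, each $\bw_u^*\bh_\ell$ is a linear combination of IID $\cCN(0,1)$ entries with unit-modulus weights, so $\bw_u^*\bh_\ell\sim\cCN(0,N)$ independently across $\ell$; writing $Z_\ell := \frac{1}{\sqrt N}\bw_u^*\bh_\ell$, the normalized terms $|Z_\ell|^2$ are IID unit-mean exponentials, whence $\frac{1}{NU}I_u = \frac{1}{U}\sum_{\ell\neq u}|Z_\ell|^2 \xrightarrow{\text{a.s.}} 1$ by the strong law of large numbers. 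I expect the main obstacle to be promoting this conditional statement to an unconditional almost-sure limit under the joint scaling: because $N=cU$ grows with $U$, the exponentials form a triangular array rather than a fixed IID sequence. I would handle this either by a second-moment (Chebyshev) bound on $\frac{1}{NU}I_u-1$ together with Borel--Cantelli along $U=1,2,\dots$, or by conditioning on $\bh_u$, applying the law of large numbers pointwise, and then integrating out; the unit-modulus structure of $\bw_u$ keeps all relevant moments bounded uniformly in $N$, which is exactly what a concentration argument requires.

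Finally I would assemble the pieces. Dividing the numerator and denominator of \eqref{eq:SINR} by $PNU$ yields
\[
\mathsf{SINR}_u = \frac{\frac{1}{N^2}\left|\bw_u^*\bh_u\right|^2\cdot\frac{N}{U}}{\frac{1}{NU}I_u + \frac{\sigma^2}{PU}}.
\]
The noise contribution $\sigma^2/(PU)\to 0$, the denominator interference term tends to $1$, and the numerator tends to $\frac{N}{U}\frac{N_\mathrm{Q}^2}{4\pi}\sin^2\!\left(\pi/N_\mathrm{Q}\right)$, so the ratio converges almost surely to the claimed limit. Everything beyond Theorem~\ref{thm:SNR} reduces to the concentration of $I_u$ and a deterministic combination of almost-sure limits.
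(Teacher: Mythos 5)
Your proposal matches the paper's proof essentially step for step: the same signal/interference/noise decomposition, the same import of Theorem~\ref{thm:SNR} for the term $\frac{1}{N^2}\left|\bw_u^*\bh_u\right|^2$, and the same key lemma that the interference terms $\left|\bw_u^*\bh_\ell\right|^2$ are IID exponential with mean $N$ --- a fact the paper states without proof and you actually justify via the independence of $\bw_u$ (a function of $\bh_u$ alone) from the $\bh_\ell$, $\ell \neq u$. One small caveat on your triangular-array fix: a plain second-moment Chebyshev bound gives $O(1/U)$ tails, which are not summable, so the Borel--Cantelli step needs a fourth-moment or Chernoff bound instead; this is a detail the paper itself glosses over by simply invoking the law of large numbers.
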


\begin{figure} [t]
	\centerline{
		\includegraphics[width=1.02\linewidth]{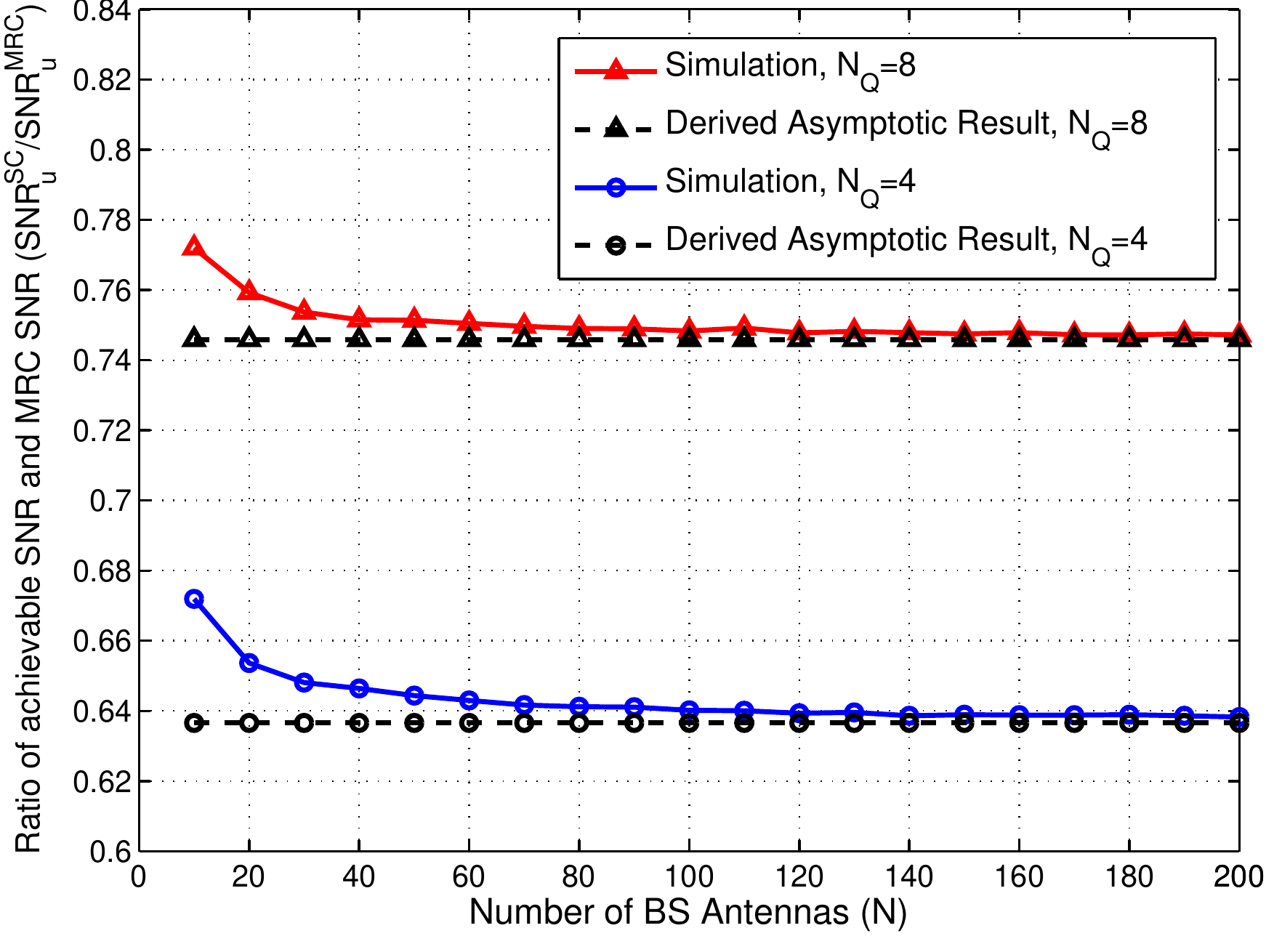}
	}
	\caption{Ratio between the achievable SNR of the proposed algorithm and receiver architecture and the MRC SNR converges to the derived analytical limit in Theorem \ref{thm:SNR} as $N \to \infty$.}
	\label{fig:BF}
\end{figure}

\begin{proof} 
	The $\mathsf{SINR}_u$ of user $u$ can be written as in \eqref{eq:SINR}. To prove that $\mathsf{SINR}$ converges almost surely to $\frac{N}{U} \frac{N_\mathrm{Q}^2}{4 \pi} \sin^2\left(\frac{\pi}{N_\mathrm{Q}}\right)$, it is sufficient to prove the following
	\begin{align}
	\lim_{N \to \infty}\frac{1}{N^2}\left|\bw_u^* \bh_u\right|^2 & \xrightarrow{\text{a.s.}} \frac{N_\mathrm{Q}^2}{4 \pi} \sin^2\left(\frac{\pi}{N_\mathrm{Q}}\right) \label{eq:Lim1}\\
	\lim_{\substack{N,U \to \infty \\ \frac{U}{N}=\text{constant}}} \frac{1}{N^2}\sum_{\ell \neq u}^{U} \left|\bw_u^* \bh_\ell\right|^2 &  \xrightarrow{\text{a.s.}} \frac{U}{N}   \label{eq:Lim2}
	\end{align}
	The first limit in \eqref{eq:Lim1} is proved in Theorem \ref{thm:SNR}. To prove the limit in \eqref{eq:Lim2}, we first note that given the combining vectors design in Algorithm \ref{alg:SCAP}, we have $\left\|\bw_u^* \bh_\ell\right\|^2 \sim \text{Exp}(N), \forall \ell \neq u$ (proof is omitted for space limitation). Then, by applying the law of large numbers, we reach \eqref{eq:Lim2}.
\end{proof}

Using Proposition \ref{prop:SINR},  we can characterize the achievable rate  with the proposed transceiver architecture as shown in the following corollary.

\begin{corollary} The achievable rate  Algorithm \ref{alg:SCAP} with the massive MIMO receiver in \sref{sec:Model} converges almost surely to $\log_2\left(1+\frac{N}{U} \frac{N_\mathrm{Q}^2}{4 \pi} \sin^2\left(\frac{\pi}{N_\mathrm{Q}}\right)\right)$  as $N, U \to \infty$ and $\frac{N}{U}$ is a constant ratio.
\label{cor:Rate}
\end{corollary}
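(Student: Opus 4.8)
The plan is to derive the rate convergence directly from the SINR convergence already established in Proposition~\ref{prop:SINR}, via the continuous mapping theorem. The achievable rate of user $u$ is the deterministic function $R_u^\mathrm{SC} = \log_2\left(1+\mathsf{SINR}_u^\mathrm{SC}\right)$ of the random quantity $\mathsf{SINR}_u^\mathrm{SC}$, so once the limiting behavior of the SINR is known, the rate follows by composing with the scalar map $x \mapsto \log_2(1+x)$.

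First I would fix the ratio $N/U$ and set $L \bydef \frac{N}{U} \frac{N_\mathrm{Q}^2}{4 \pi} \sin^2\left(\frac{\pi}{N_\mathrm{Q}}\right)$, which by Proposition~\ref{prop:SINR} is the almost-sure limit of $\mathsf{SINR}_u^\mathrm{SC}$ under the joint scaling $N, U \to \infty$ with $N/U$ constant; here $L$ is a finite nonnegative constant, since $\sin^2(\pi/N_\mathrm{Q})$ is bounded and $N/U$ is held fixed. The map $g(x) = \log_2(1+x)$ is continuous on $[0,\infty)$, hence in particular continuous at $x = L$. Invoking the continuous mapping theorem for almost-sure convergence --- namely, if $X_n \xrightarrow{\text{a.s.}} X$ and $g$ is continuous at $X$ then $g(X_n) \xrightarrow{\text{a.s.}} g(X)$ --- applied along the same index sequence used in Proposition~\ref{prop:SINR}, I would obtain
\[
\log_2\left(1+\mathsf{SINR}_u^\mathrm{SC}\right) \xrightarrow{\text{a.s.}} \log_2\left(1+L\right) = \log_2\left(1+\frac{N}{U} \frac{N_\mathrm{Q}^2}{4 \pi} \sin^2\left(\frac{\pi}{N_\mathrm{Q}}\right)\right),
\]
which is exactly the claimed limit.

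I do not expect any genuine obstacle here: the analytical substance of the corollary is entirely inherited from Proposition~\ref{prop:SINR}, and the remaining work is the routine verification that $g$ is continuous at the constant limit point $L$ and that the limiting regime for the rate coincides with that of the SINR. The only point worth stating explicitly is that the almost-sure statement requires no integrability or dominated-convergence argument: convergence of $\mathsf{SINR}_u^\mathrm{SC}$ holds pathwise for almost every channel realization, and the continuity of $g$ simply transfers that convergence through $g$ on each such path.
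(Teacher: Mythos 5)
Your proof is correct, and its core is the same as the paper's: inherit everything from Proposition~\ref{prop:SINR} and push the almost-sure SINR limit through the continuous map $x \mapsto \log_2(1+x)$. The one difference is that the paper also invokes the dominated convergence theorem (citing \cite{Hoydis2013}), which you explicitly argue is unnecessary. For the statement as literally written --- almost-sure convergence of the random quantity $\log_2\bigl(1+\mathsf{SINR}_u^\mathrm{SC}\bigr)$ --- you are right: pathwise convergence plus continuity of $g(x)=\log_2(1+x)$ at the finite constant limit $L$ is all that is needed, and no integrability condition enters. The dominated convergence step in the paper matters only under the alternative (and common, following \cite{Hoydis2013}) reading in which the ``achievable rate'' is the ergodic rate $\bbE\bigl[\log_2\bigl(1+\mathsf{SINR}_u^\mathrm{SC}\bigr)\bigr]$; to conclude that this deterministic sequence converges to $\log_2(1+L)$ one must pass the limit inside the expectation, which requires domination or uniform integrability (and is not entirely cosmetic, since $\log_2(1+\mathsf{SINR}_u^\mathrm{SC})$ must be shown to be suitably bounded, e.g.\ via $\mathsf{SINR}_u^\mathrm{SC} \leq \rho\,\|\bh_u\|^2$ type bounds). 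So your argument is complete for the a.s.\ statement, and slightly more economical than the paper's; the paper's extra tool is what you would need if you also wanted the ergodic-rate version of the corollary.
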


The proof of Corollary \ref{cor:Rate} follows directly from the $\mathsf{SINR}$ result in Proposition \ref{prop:SINR}  by using the dominated convergence and the continuous mapping theorems \cite{Hoydis2013}.

\section{Simulation Results} \label{sec:Results}
To verify these results by numerical simulations, \figref{fig:BF} shows the ratio of the achievable SNR with the proposed architecture and combining algorithm over the optimal MRC SNR when only user $u$ is served. \figref{fig:BF} illustrates that this ratio approaches the derived $\frac{N_\mathrm{Q}^2}{4 \pi} \sin^2\left(\frac{\pi}{N_\mathrm{Q}}\right) $ approximation result in Theorem \ref{thm:SNR} as the number of receive antennas increases.

\begin{figure} [t]
	\centerline{
		\includegraphics[width=1.01\linewidth]{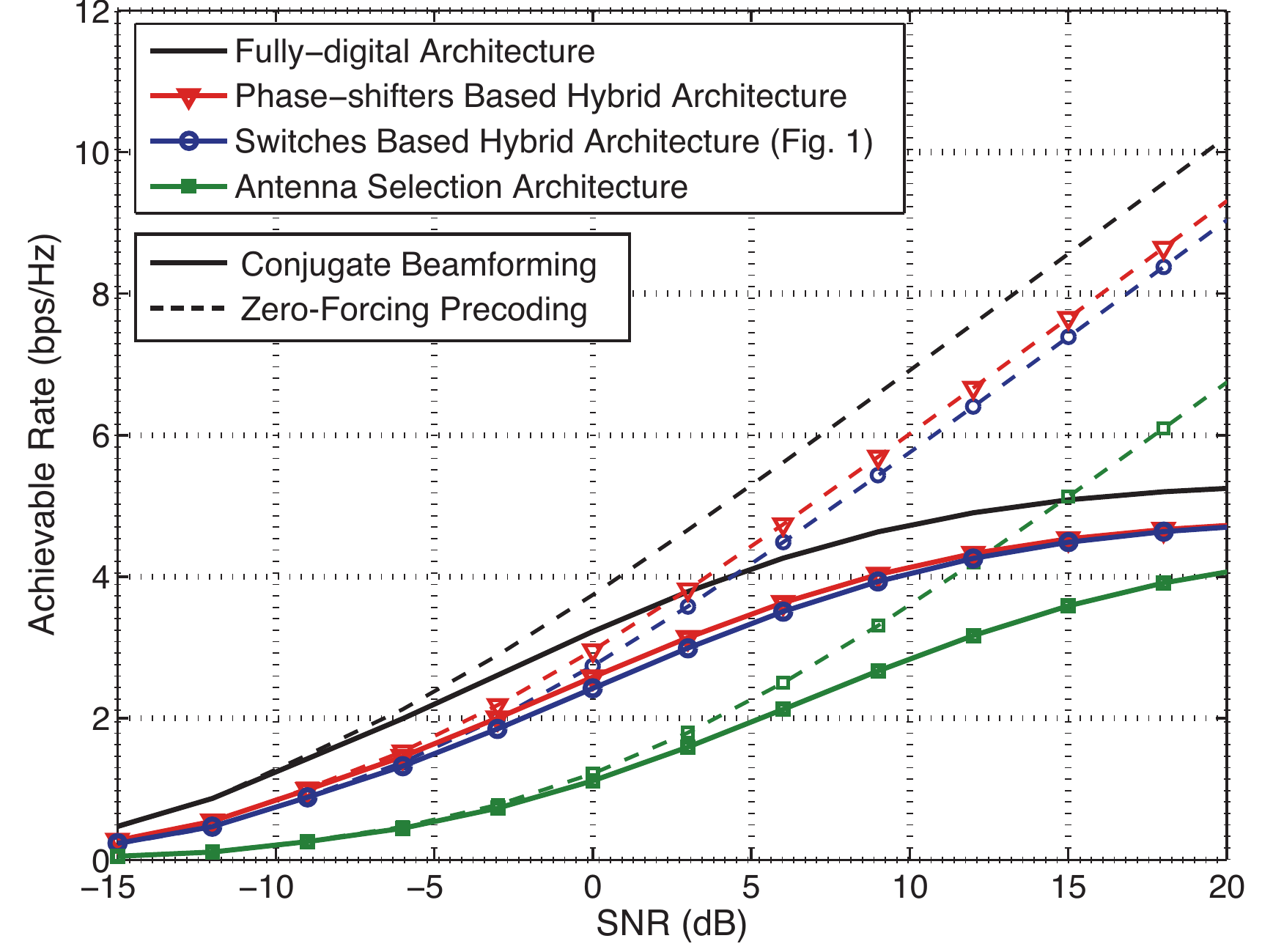}
	}
	\caption{The per-user achievable rate of the proposed receiver architecture is compared with the fully-digital, phase-shifers based hybrid precoding, and antenna-selection architectures for different combining solutions, assuming a system with $N=64$ antennas and  $U=3$ users.}
	\label{fig:Rate}
\end{figure}

In \figref{fig:Rate}, the achievable rate of the switches-based hybrid  architecture in \figref{fig:Model} is compared with the fully-digital, antenna selection, and phase-shifters based hybrid receivers in a multi-user setting with $N=64$ antennas, and $U=3$ users. Combining is assumed to be done with either matched-filtering or zero-forcing. For the switches-based and phase-shifters based hybrid architectures, we adopt the zero-forcing solution in \cite{Alkhateeb2014b}, in which analog combining vectors are designed per user (similar to that in \sref{sec:Proposal}) and zero-forcing is then done in the baseband based on the effective channels. For the antenna-selection solutions, we assume the receiver selects the best $U$ antennas that maximize the achievable rate subject to the matched-filter or zero-forcing solution via an exhaustive search over all antenna subset combinations. To account for the possible SNR degradation in the different receiver architectures, we penalized the SNR of each receiver with its noise figure (NF). We assumed receivers use LNA's of $22$ dB gain, and $5$ dB NF, and mixers of $12$ dB NF. The power dividers, combiners, and switches in the hybrid architectures are assumed to be implemented through cascaded stages using two-ports building blocks. The NF of each combining/dividing stage is $1$dB and each switching stage is $1.5$ dB. Phase shifters with $4$ dB NF are assumed to be accompanied by VGAs of $4$ dB NF to compensate for their gains. These NF values are collected from various references for RF components operating at $2-5$ GHz frequency.
With these values, the composite NF of the fully-digital and antenna-selection receivers is $5.1$ dB, while that of phase-shifters based and switched based hybrid architectures are $5.7$ dB, and $7.2$ dB, respectively. \figref{fig:Rate} shows that the performance of the switches-based architecture is close to the fully-digital solution with good gain over the antenna selection architecture thanks to the higher array gain. Finally, the architecture with switches is shown to have a small loss compared to that with phase shifters. In terms of power consumption, though, the main difference between those two hybrid architectures is that the one with phase shifters requires $N \times N_\mathrm{RF}=192$ controllable phase shifters, while that with switches requires only $N_\mathrm{RF} \times N_Q=12$ constant phase shifters. The other power consumption circuits like that activating the switches are similar in the two hybrid architectures. Note also that the receiver calibration techniques for the two architectures are expected to be similar, as they both can realize the same combining matrices. It is worth mentioning here that this initial study did not consider other factors like switches isolation and speed, whose impacts on the performance need to be investigated for a refined conclusion about the different architectures. 

\section{Conclusion}
In this paper, we proposed a receiver architecture for massive MIMO systems based on antenna switching. For this architecture, we developed and analyzed a quasi-coherent RF combining algorithm. Results showed that the proposed receiver can achieve at least a $\frac{N^2_\mathrm{Q}}{4 \pi} \sin^2\left(\frac{\pi}{N_\mathrm{Q}}\right)$ of the optimal maximum-ratio combining gain with $N_Q$ constant phase shifters. Results also illustrated that the achievable rate with the proposed receiver architecture is comparable to the fully-digital solutions in multi-user massive MIMO systems. For future research, it would be interesting to investigate the performance of the proposed architecture incorporating the impact of practical factors like switches imperfect isolation, switching speed, and more detailed insertion losses.

%--------- Appendix --------------
\appendices
\section{}\label{app:SNR}
\begin{proof}[Proof of Theorem \ref{thm:SNR}]
	Define $\rho=\frac{P}{\sigma^2}$, the SNR of the MRC with fully-digital receiver is $\mathsf{SNR}_u^{\text{MRC}}= \rho \sum_{n=1}^{N} |h_{u,n}|^2$.
	 
	 As the \textit{favorable} massive MIMO propagation conditions hold for our channel model \cite{Ashikhmin2012}, $\lim_{N \to \infty}\frac{1}{N} \mathsf{SNR}_u^{\text{MRC}} \xrightarrow{\text{a.s.}} \rho$.
	
	Now, considering Algorithm \ref{alg:SCAP} for the receiver in \figref{fig:Model}, the achieved SNR with $N_\mathrm{Q}$ phase shifters is $\mathsf{SNR}_u^\mathrm{SC}=\frac{\rho \left|\bw_u^* \bh_u\right|^2}{\left\|\bw_u\right\|^2}$.
	
	The desired signal power term can be expanded as
	\begin{align}
	\left|\bw_u^* \bh_u\right|^2&=\left|\sum_{n} w_{u,n} h_{u,n}\right|^2
	=\left|\sum_{q=1}^{N_\mathrm{Q}}{\sum_{n \in \cS_q} {h_{u,n} e^{- j \frac{(q-1) 2 \pi }{N_\mathrm{Q}}} }}\right|^2, \nonumber \\
	&\hspace{-40pt}= \left|\sum_{q=1}^{N_\mathrm{Q}}{\sum_{n \in \cS_q} {\left|h_{u,n}\right| \cos\left(\bar{\theta}^{q}_{u,n}\right) }}\right|^2 + \left| \sum_{q=1}^{N_\mathrm{Q}}{\sum_{n \in \cS_q} {\left|h_{u,n}\right| \sin\left(\bar{\theta}^{q}_{u,n}\right) }}\right|^2
	\end{align}
	where $\bar{\theta}^{q}_{u,n}=\theta_{u,n}-\frac{2 \pi (q-1)}{N_\mathrm{Q}}$. Now, we use the law of large numbers to get $\lim_{N \to \infty} \frac{1}{N^2} \left\|\bw_u^* \bh_u\right\|^2  \xrightarrow{\text{a.s.}}$
	\begin{align}
	&  \left(\bbE\left[\left|h_{u,n}\right| \cos(\theta_{u,n})\left| \theta_{u,n} \sim \cU \left(0, \frac{2 \pi}{N_\mathrm{Q}}  \right) \right. \right] \right)^2\nonumber\\
	&  +\left(\bbE\left[\left|h_{u,n}\right| \sin(\theta_{u,n})\left| \theta_{u,n} \sim \cU \left(0, \frac{2 \pi}{N_\mathrm{Q}} \right) \right. \right]\right)^2, 
	\label{eq:12}
	\end{align} 
	where this follows by noting that $\bar{\theta}_{u,n}^{q} \in \left[0, \frac{2 \pi}{N_\mathrm{Q}}\right], \forall n, u, q$. The almost sure convergence follows directly by noting that the massive MIMO favorable conditions hold for the adopted model with $|h_{u,n}|$ Rayleigh distributed and ${\theta}_{u,n}$  uniformly distributed \cite{Ashikhmin2012}. Leveraging the independence between $|h_{u,n}|$ and $\theta_{u,n}$, we reach $\lim_{N \to \infty} \frac{1}{N^2} \left\|\bw_u^* \bh_u\right\|^2  \xrightarrow{\text{a.s.}} $
	\begin{align}
	&   \left(\bbE\left[\left|h_{u,n}\right|  \right]\right)^2 \left(\bbE\left[\cos(\theta_{u,n})\left| \theta_{u,n} \sim \cU \left(0, \frac{2 \pi}{N_\mathrm{Q}} \right. \right) \right] \right)^2\nonumber \\ & +\left(\bbE\left[\left|h_{u,n}\right|  \right]\right)^2 \left(\bbE\left[\sin(\theta_{u,n})\left| \theta_{u,n} \sim \cU \left(0, \frac{2 \pi}{N_\mathrm{Q}} \right) \right. \right]\right)^2 \\
	& \stackrel{(a)}{=} \frac{N_\mathrm{Q}^2}{16 \pi} \left(  \sin^2\left(\frac{2 \pi}{N_\mathrm{Q}}\right) + \left(1-\cos\left(\frac{2 \pi}{N_\mathrm{Q}}\right)\right)^2 \right), \\
	& \stackrel{(b)}{=} \frac{N_\mathrm{Q}^2}{4 \pi}  \sin^2\left(\frac{ \pi}{N_\mathrm{Q}}\right),
	\label{eq:15}
	\end{align}
	where (a) follows from the distributions of $|h_{u,n}|$ and $\theta_{u,n}$, and (b) is by using the trigonometric identities. Finally, we note that $\left\|\bw_u\right\|^2=N$ given the adopted switching based receiver architecture. Hence, we reach 
	\begin{equation}
	\lim_{N \to \infty} \frac{1}{N}\mathsf{SNR}_u^\mathrm{SC} \xrightarrow{\text{a.s.}} \frac{N_\mathrm{Q}^2}{4 \pi} \sin^2\left(\frac{\pi}{N_\mathrm{Q}}\right) \rho,
	\end{equation}
	which completes the proof.
\end{proof}

%%%  References ----------------------------
\bibliographystyle{IEEEtran}
%\bibliography{References}

\begin{thebibliography}{1}
	\providecommand{\url}[1]{#1}
	\csname url@samestyle\endcsname
	\providecommand{\newblock}{\relax}
	\providecommand{\bibinfo}[2]{#2}
	\providecommand{\BIBentrySTDinterwordspacing}{\spaceskip=0pt\relax}
	\providecommand{\BIBentryALTinterwordstretchfactor}{4}
	\providecommand{\BIBentryALTinterwordspacing}{\spaceskip=\fontdimen2\font plus
		\BIBentryALTinterwordstretchfactor\fontdimen3\font minus
		\fontdimen4\font\relax}
	\providecommand{\BIBforeignlanguage}[2]{{%
			\expandafter\ifx\csname l@#1\endcsname\relax
			\typeout{** WARNING: IEEEtran.bst: No hyphenation pattern has been}%
			\typeout{** loaded for the language `#1'. Using the pattern for}%
			\typeout{** the default language instead.}%
			\else
			\language=\csname l@#1\endcsname
			\fi
			#2}}
	\providecommand{\BIBdecl}{\relax}
	\BIBdecl
	
	\bibitem{Boccardi2014}
	F.~Boccardi, R.~Heath, A.~Lozano, T.~Marzetta, and P.~Popovski, ``Five
	disruptive technology directions for {5G},'' \emph{IEEE Comm. Mag.}, vol.~52,
	no.~2, pp. 74--80, Feb. 2014.
	
	\bibitem{ElAyach2014}
	O.~El~Ayach, S.~Rajagopal, S.~Abu-Surra, Z.~Pi, and R.~Heath, ``Spatially
	sparse precoding in millimeter wave {MIMO} systems,'' \emph{IEEE Trans. on
		Wireless Commun.}, vol.~13, no.~3, pp. 1499--1513, Mar. 2014.
	
	\bibitem{Alkhateeb2014b}
	A.~Alkhateeb, G.~Leus, and R.~Heath, ``Limited feedback hybrid precoding for
	multi-user millimeter wave systems,'' \emph{IEEE Trans. on Wireless Commun.},
	vol.~14, no.~11, pp. 6481--6494, Nov. 2015.
	
	\bibitem{Zeng2014}
	Y.~Zeng, R.~Zhang, and Z.~N. Chen, ``Electromagnetic lens-focusing antenna
	enabled massive {MIMO}: Performance improvement and cost reduction,''
	\emph{IEEE Jour. on Select. Areas in Commun.}, vol.~32, no.~6, pp.
	1194--1206, June 2014.
	
	\bibitem{Brady2013}
	J.~Brady, N.~Behdad, and A.~Sayeed, ``Beamspace {MIMO} for millimeter-wave
	communications: System architecture, modeling, analysis, and measurements,''
	\emph{IEEE Trans. on Ant. and Propag.}, vol.~61, no.~7, pp. 3814--3827, July
	2013.
	
	\bibitem{Mendez-Rial2015}
	R.~M{\'e}ndez-Rial, C.~Rusu, A.~Alkhateeb, N.~Gonz{\'a}lez-Prelcic, and R.~W.
	Heath~Jr, ``Channel estimation and hybrid combining for mmwave: Phase
	shifters or switches?'' in \emph{Info. Theory and App. Workshop}, 2015.
	
	\bibitem{Ahn2009}
	M.~Ahn, H.-W. Kim, C.-H. Lee, and J.~Laskar, ``A 1.8-{GHz} 33-{dBm} {P}
	0.1-{dB} {CMOS} {T/R} switch using stacked {FETs} with feed-forward
	capacitors in a floated well structure,'' \emph{IEEE Transactions on
		Microwave Theory and Techniques}, vol.~57, no.~11, pp. 2661--2670, Nov. 2009.
	
	\bibitem{Hoydis2013}
	J.~Hoydis, S.~ten Brink, and M.~Debbah, ``Massive {MIMO} in the {UL/DL} of
	cellular networks: How many antennas do we need?'' \emph{IEEE Jour. on
		Select. Areas in Commun.}, vol.~31, no.~2, pp. 160--171, Feb. 2013.
	
	\bibitem{Ashikhmin2012}
	A.~Ashikhmin and T.~Marzetta, ``Pilot contamination precoding in multi-cell
	large scale antenna systems,'' in \emph{IEEE Int. Symp. on Info. Theo.}, July
	2012, pp. 1137--1141.
	
\end{thebibliography}

\end{document}